\documentclass[journal]{IEEEtran}
\usepackage{amsmath}
\usepackage{amssymb}
\usepackage{amsthm}
\usepackage{xparse}
\usepackage{xspace}
\usepackage[T1]{fontenc} 

\NewDocumentCommand{\sym}{smO{}mo}{
	\expandafter\NewDocumentCommand\csname #2\endcsname{#3}{\ensuremath{{#4}}\xspace}
}

\NewDocumentCommand{\name}{mmm}{
	\expandafter\NewDocumentCommand\csname #1n\endcsname{}{#2\xspace}
	\expandafter\NewDocumentCommand\csname #1a\endcsname{}{#3\xspace}	
	\expandafter\NewDocumentCommand\csname #1ns\endcsname{}{#2s\xspace}
	\expandafter\NewDocumentCommand\csname #1as\endcsname{}{#3s\xspace}	
}

\name{lft}{Left-unique}{}
\name{fdemand}{Fixed-demand}{}
\name{acnet}{AC-network}{}
\name{sol}{solution}{solution}
\name{feas}{feasibility}{FEAS}
\name{opf}{optimal power flow}{OPF}
\name{mpf}{maximum potential flow}{MPF}
\name{ac}{AC}{AC}
\name{nph}{NP-hard}{}

\sym{powerset}[mO{}]{\mathcal{P}_{#2}(#1)}
\sym{net}[O{}O{}]{\mathcal{N}_{#1}^{#2}}[\acnetn]
\sym{feas}[O{\net}O{}]{{FEAS}_{#2}(#1)}
\sym{buses}[O{}O{}]{N_{#1}^{#2}}[set of buses]
\sym{lines}[O{}O{}]{E_{#1}^{#2}}[set of lines]
\sym{pflow}[O{}]{p_{#1}}[real line power flow]
\sym{qflow}[O{}]{q_{#1}}[reactive line power flow]
\sym{gensy}{G}
\sym{loadsy}{L}
\sym{pasy}{\Theta}
\sym{pa}[O{}]{\pasy_{#1}}[phase angle(s)]
\sym{pflowu}{P}
\sym{qflowu}{Q}
\sym{pdem}[O{}]{{\pflowu}_{#1}}[real power demand]
\sym{qdem}[O{}]{{\qflowu}_{#1}}[reactive power demand]
\sym{gens}[O{}]{\buses_{\gensy}^{#1}}[set of generators]
\sym{loads}[O{}]{\buses_{\loadsy}^{#1}}[set of loads]
\sym{deltamax}{\overline{\Delta}}[maximum phase angle difference]
\sym{ba}[O{}]{i_{#1}}[bus]
\sym{bb}[O{}]{j_{#1}}[bus]
\sym{ca}[O{}]{s_{#1}}[capacity]
\sym{co}[O{}]{g_{#1}}[conductance]
\sym{su}[O{}]{b_{#1}}[susceptance]
\sym{sedge}[O{\ba\bb}]{#1}
\sym{edge}[O{\ba\bb}O{\su}O{\co}]{\sedge[#1]_{#2}^{#3}}[line from \ba to (between) \bb with susceptance \su and conductance \co]
\sym{dlines}[O{\lines}]{{#1}^d}[set of lines, directed version]
\sym{npmax}[O{\co}O{\su}]{{\hat{\pflow}}}[real line power flow for phase angle difference of -\deltamax]
\sym{nqmax}[O{\co}O{\su}]{{\hat{\qflow}}}[reactive line power flow for phase angle difference of -\deltamax]

\newtheorem{lemma}{Lemma}
\newtheorem{theorem}{Theorem}

\title{AC-Feasibility on Tree Networks is NP-Hard}
\author{Karsten~Lehmann, Alban~Grastien, and Pascal~Van~Hentenryck \\
National ICT Australia (NICTA) and Australian National University, Canberra, Australia \\
Email: {first.last@nicta.com.au}
}

\begin{document}
\maketitle         

\begin{abstract}
  Recent years have witnessed significant interest in convex
  relaxations of the power flows, several papers showing that the
  second-order cone relaxation is tight for tree networks under
  various conditions on loads or voltages.  This paper shows that
  \aca-\feasn, i.e., to find whether some generator dispatch can
  satisfy a given demand, is NP-Hard for tree networks.
\end{abstract}

\section*{Nomenclature}
\begin{tabular}{ll}
    \net & \acnetn\\
    \buses & set of buses\\
    \gens & set of generators\\
    \loads & set of loads\\
    \ba& bus\\
    \bb& bus\\
    \lines & set of lines\\
    \dlines & set of lines with direction\\
    \edge & line from \ba to \bb\\ 
    \su& susceptance\\
    \co& conductance\\
    \ca& capacity\\
    \deltamax& maximum phase angle difference\\
    \pa & phase angle(s)\\
    \npmax & real line power flow for phase angle difference of\\
    &-\deltamax\\
    \nqmax & reactive line power flow for phase angle difference\\
    &off -\deltamax\\
    \pflow & real line power flow\\
    \qflow & reactive line power flow\\
    \pdem & real power demand\\
    \qdem & reactive power demand
\end{tabular}

\section{Introduction}

Many interesting applications in power systems, including optimal
power flows, optimize an objective function over the steady-state
power flow equations, which are nonlinear and nonconvex. These
applications typically include an \emph{\aca-\feasn} (\aca-\feasa)
subproblem: find whether some generator dispatch can satisfy a given
demand.

The first NP-hardness proof for \aca-\feasn was given for a cyclic
network structure in \cite{verma2009power}. It relies on a variant of the DC model
\cite{Stott:2009bb} but uses a sin function around the phase angle
difference. From an AC perspective, this means that conductances are
$0$, voltage magnitudes are all fixed at $1$, and reactive power is
ignored.  In recent years, there has been significant interest in
convex relaxations of the AC power flow equations following the
seminal work of Jabr, Lavaei, and Low \cite{Jabr06,lavaei2012zero}. Several
papers have shown that the second-order cone relaxation on {\em tree
  networks} is tight if load over-satisfaction is allowed
\cite{zhang2013geometry,li2012exact,lavaei2012zero}.  The second-order cone
relaxation is also tight on tree networks if the voltage bounds are
relaxed \cite{farivar2011inverter}.  Tree networks are important
obviously since they are the backbones of distribution systems.

This paper proves that \aca-\feasn is NP-Hard for tree networks. The
proof does not require bounds on generation and is valid for realistic
conductances, susceptances, and bounds on the phase angles. 

\section{Problem Definition}

This section presents the problem description and the assumptions
underlying the proof.  Our \aca-\feasn problem receives as input fixed
demands for real (\pdem) and reactive (\qdem) power. It fixes all
voltage magnitudes to one and assumes that lines have a maximum phase
angle difference $0< \deltamax \leq \pi/2$.  The proof also assumes a
susceptance $\su \leq 0$ and conductance $\co \geq 0$ and imposes a
natural condition on the relationship between $\su$, $\co$, and
$\deltamax$.

In the model, the set of buses \buses is defined as the disjoint union of 
the set of loads \loads and the set of generators \gens.
Hence every bus is either a generator or a load (with possibly $0$ demand).
$\lines \subseteq \powerset{\buses}[2]$ is
the set of lines and \dlines is the set of directed lines.

With these assumptions and notations, the \aca-\feasn problem consists
in finding the phase angles $\pa_i$, the real power flows
$\pflow_{ij}$, and the reactive power flows $\qflow_{ij}$ satisfying
\begin{align*}
    \forall \ba \in &\loads:\\
    &\sum_{\sedge \in \dlines} \pflow[\sedge] = \pdem[\ba]\\
    &\sum_{\sedge \in \dlines} \qflow[\sedge] = \qdem[\ba]\\
    \forall \ba \in &\gens:\\
    &\sum_{\sedge \in \dlines} \pflow[\sedge] \geq 0\\
    \forall \edge \in &\dlines:\\
    &\pflow[\sedge] = \co(1 - \cos(\pa[\ba]-\pa[\bb])) - \su\sin(\pa[\ba] - \pa[\bb])\\
    &\qflow[\sedge] = -\su(1 - \cos(\pa[\ba]-\pa[\bb])) - \co\sin(\pa[\ba] - \pa[\bb])\\
    &|\pa[\ba] - \pa[\bb]| \leq \deltamax.
\end{align*}

\noindent
This formulation uses phase angles and a bound on phase angles since
this makes the proof simpler. Phase angles are not typically used in
optimization over tree networks. However that there is no loss of
generality in this formulation, since imposing a maximum phase angle
difference is equivalent to enforcing a line capacity (thermal
limit). Indeed, the maximum phase angle difference \deltamax implies a
capacity of
\begin{align*}
\ca :=&
2(\co^2 + \su^2)(1 - \cos(\deltamax))\\
=&
(\co(1-\cos(\deltamax))-\su\sin(\deltamax))^2\\
&+ (\su(1-\cos(\deltamax))-\co\sin(\deltamax))^2.
\end{align*}

For a given capacity \ca and using that the phase angle difference has to be within $[-\pi/2, \pi/2]$ we can define a maximum phase angle difference \deltamax
$$
\deltamax :=
\begin{cases}
    \pi/2 & \text{if } s > 2(\su^2+\co^2)\\
    \arccos(1- \frac{s}{2(\su^2+\co^2)}) & \text{otherwise.}
\end{cases}
$$

\section{\aca-Feasibility on Star Networks is \nphn}

This section proves that the AC-feasibility of an \aca network with a
star structure and one load is NP-hard. The inspiration underlying the
proof came from the 2-bus example in \cite{bukhsh2013} that
exhibits disconnected feasibility regions.

Let $0 < \deltamax \leq \pi/2$.  The key element of the proof is that,
for any choice of \su and \co, the ratio between real and reactive
power is unique with respect to the phase angle difference. This is
captured in the following lemma, which also uses the following
notations for clarity:
\begin{align*}
\npmax &:= \co(1-\cos(-\deltamax))-\su\sin(-\deltamax)\\
\nqmax &:= -\su(1-\cos(-\deltamax))-\co\sin(-\deltamax).
\end{align*}

\begin{lemma}
\label{lemma}
Let \edge be a line with $\{\su, \co\} \neq \{0\}$ and $\deltamax \geq
\pa[\ba] - \pa[\bb] \geq 0$.  The following statements are true:
\begin{align}
    \pflow[\bb\ba]\nqmax&\leq \qflow[\bb\ba]\npmax;
    \label{eq:ratio2}\\
    \pflow[\bb\ba]\nqmax= \qflow[\bb\ba]\npmax &\iff \pa[\ba]-\pa[\bb] \in \{0, \deltamax\}.
    \label{eq:equal2}
\end{align}
\end{lemma}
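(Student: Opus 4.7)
The plan is to reduce the claim to a single scalar inequality in $\delta := \pa[\ba] - \pa[\bb]$ by direct algebraic simplification, and then dispatch that inequality with a standard half-angle identity.

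First I would rewrite the four quantities involved. Using the definitions of \pflow and \qflow together with the fact that reversing the direction of the line flips the sign of the phase-angle difference, one obtains
\begin{align*}
\pflow[\bb\ba] &= \co(1-\cos\delta) + \su\sin\delta,\\
\qflow[\bb\ba] &= -\su(1-\cos\delta) + \co\sin\delta,\\
\npmax &= \co(1-\cos\deltamax) + \su\sin\deltamax,\\
\nqmax &= -\su(1-\cos\deltamax) + \co\sin\deltamax.
\end{align*}
In particular $(\npmax,\nqmax)$ is $(\pflow[\bb\ba],\qflow[\bb\ba])$ evaluated at $\delta=\deltamax$.

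Next I would expand $D := \pflow[\bb\ba]\nqmax - \qflow[\bb\ba]\npmax$. The cross terms involving the product $\su\co$ cancel, and after collecting what remains one is left with the compact factored form
\[
D = (\su^2 + \co^2)\bigl((1-\cos\delta)\sin\deltamax - \sin\delta(1-\cos\deltamax)\bigr).
\]
Since $\{\su,\co\} \neq \{0\}$, the prefactor is strictly positive, so \eqref{eq:ratio2} is equivalent to the trigonometric inequality $(1-\cos\delta)\sin\deltamax \leq \sin\delta(1-\cos\deltamax)$, and the equality case in \eqref{eq:equal2} is equivalent to equality there.

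Finally, I would handle the cases $\delta=0$ and $\delta=\deltamax$ by inspection (both sides vanish, respectively coincide), and for $\delta\in(0,\deltamax)$ divide through by the strictly positive quantity $(1-\cos\delta)(1-\cos\deltamax)$, reducing the goal to $\cot(\deltamax/2) < \cot(\delta/2)$ via the identity $\sin\theta/(1-\cos\theta)=\cot(\theta/2)$. Strict monotone decrease of $\cot$ on $(0,\pi/2)$, which covers the relevant range since $\deltamax\leq\pi/2$, then finishes the proof. The only real obstacle is the bookkeeping required to see the $\su\co$ cross terms cancel in the expansion of $D$; once the factored form is in hand, the rest is a one-line trigonometric fact.
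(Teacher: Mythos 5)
Your proof is correct and follows essentially the same route as the paper's: both arguments rest on the cancellation of the $\su\co$ cross terms (leaving the positive factor $\su^2+\co^2$), the half-angle identity relating $\sin\theta/(1-\cos\theta)$ to the half-angle tangent/cotangent, and strict monotonicity of that function on the relevant interval, with the endpoint cases $\delta\in\{0,\deltamax\}$ checked by inspection. The only difference is organizational: the paper starts from $\tan(-\deltamax/2)<\tan(-\delta/2)$ and builds the product inequality forward, whereas you factor the difference $D$ directly and then divide; the mathematical content is identical.
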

\begin{proof}
To simplify notations we define $\Delta := \pa[\ba]-\pa[\bb]$; $t := \tan(-\Delta/2);$ $u := \tan(-\deltamax/2)$.
Let us assume that $\deltamax > \Delta > 0$.
Using the fact that the tangent is strongly monotonic increasing within the interval $(-\pi/4, 0)$ we have
\begin{align*}
u 
&<
t\\
u(\su^2+\co^2)
&<
t(\su^2+\co^2)\\
u\su^2 - t\co^2
&<
t\su^2 - u\co^2\\
u\su^2 - t\co^2 +\su\co(1-ut)
&<
t\su^2 - u\co^2 +\su\co(1-ut)\\
(\su-t\co)(u\su+\co)
&<
(\su-u\co)(t\su+\co)\\
(t\co-\su)(-u\su - \co)
&<
(u\co-\su)(-t\su - \co)
\end{align*}
Using the trigonometric identity 
$\tan(\alpha/2) = \frac{1-\cos(\alpha)}{\sin(\alpha)}$
and multiplying both sides of the last equation with 
$\sin(-\deltamax)\sin(-\Delta)$
(using the fact that $\Delta > 0$)
we get
\begin{align*}
(\co(1-\cos(-\Delta)) -\su\sin(-\Delta))\\
\cdot (-\su(1-\cos(-\deltamax)) -\co\sin(-\deltamax))\\
< 
(\co(1-\cos(-\deltamax))-\su\sin(-\deltamax) )\\
\cdot(-\su(1-\cos(-\Delta)) -\co\sin(-\Delta))
\end{align*}
which is 
$\pflow[\bb\ba]\nqmax < \qflow[\bb\ba] \npmax$ for $\deltamax > \Delta > 0$.
Eq.~\eqref{eq:ratio2} is true if $\Delta = 0$ or $\Delta = \deltamax$.
Hence Eq.~\eqref{eq:ratio2} and Eq.~\eqref{eq:equal2} are true in general.
\end{proof}

\noindent
To make sure that the load used in our encoding is in fact consuming
power, it is necessary to ensure that $\npmax < 0$.  This introduces a
constraint on the values of \deltamax, \su, and \co in the networks
considered by the proof. Note however that this constraint does not
remove realistic values for $b$, $g$, and $\deltamax$. The next lemma
establishes an important property of the phase angles derived from the
real power flow equation.

\begin{lemma}
    \label{lem:negative}
    Consider $0 < \deltamax$, $|\Delta| \leq \deltamax$, and \su
    and \co be such that the condition $\npmax < 0$ holds.  Then we have
    $$\co(1-\cos(\Delta)) - \su\sin(\Delta) \geq 0 \implies \Delta \geq 0.$$
\end{lemma}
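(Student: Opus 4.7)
The plan is to prove the contrapositive: if $-\deltamax \leq \Delta < 0$, then $\co(1-\cos(\Delta)) - \su\sin(\Delta) < 0$. Setting $\Delta' := -\Delta \in (0,\deltamax]$ and using that cosine is even and sine is odd, this reduces to showing $\co(1-\cos(\Delta')) + \su\sin(\Delta') < 0$ for every such $\Delta'$.

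The key step is the half-angle substitution $1-\cos(\Delta') = 2\sin^2(\Delta'/2)$ and $\sin(\Delta') = 2\sin(\Delta'/2)\cos(\Delta'/2)$, which factors the expression as
$$\co(1-\cos(\Delta')) + \su\sin(\Delta') = 2\sin(\Delta'/2)\bigl[\co\sin(\Delta'/2) + \su\cos(\Delta'/2)\bigr].$$
Since $0 < \Delta'/2 \leq \deltamax/2 \leq \pi/4$, both $\sin(\Delta'/2)$ and $\cos(\Delta'/2)$ are strictly positive, so after dividing out these positive factors it suffices to prove $\co\tan(\Delta'/2) + \su < 0$.

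Applying the same identities to the hypothesis unpacks $\npmax = \co(1-\cos(\deltamax)) + \su\sin(\deltamax) < 0$ into precisely $\co\tan(\deltamax/2) + \su < 0$. Since $\co \geq 0$ and $\tan$ is monotonically increasing on $[0,\pi/4]$, we have $\co\tan(\Delta'/2) \leq \co\tan(\deltamax/2)$, and therefore $\co\tan(\Delta'/2) + \su \leq \co\tan(\deltamax/2) + \su < 0$, closing the contrapositive.

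The main obstacle I anticipate is spotting that the same half-angle factorization reduces both the sought conclusion and the hypothesis $\npmax < 0$ to a single inequality linear in $\tan(\cdot/2)$; once this shared form is in hand, monotonicity of the tangent on $[0,\pi/4]$ finishes the argument with no further calculation. The only care required beyond that is sign bookkeeping, which is mild because the problem assumes $\su \leq 0$ and $\co \geq 0$, so the two terms of $\co\tan(\Delta'/2)+\su$ already have opposite signs and the inequality is driven entirely by the magnitude of the half-angle.
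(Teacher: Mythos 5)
Your proof is correct and follows essentially the same route as the paper's: both reduce the hypothesis $\npmax<0$ and the target inequality to the common form $\co\tan(\cdot/2)+\su<0$ (you via the half-angle factorization, the paper via $\tan(\alpha/2)=\frac{1-\cos\alpha}{\sin\alpha}$, which is the same identity) and then invoke monotonicity of the tangent together with $\co\geq 0$. The only cosmetic difference is that you argue by contrapositive where the paper argues by contradiction.
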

\begin{proof}
  For $\Delta = 0$, we have $\co(1-\cos(-\Delta)) - \su\sin(-\Delta) =
  0$.  Assume that $\Delta < 0$.  We have
\begin{align*}
    0 &> \npmax = \co(1-\cos(-\deltamax)) - \su\sin(-\deltamax)\\
    0 &> \co(1-\cos(\deltamax)) + \su\sin(\deltamax)\\
    -\su\sin(\deltamax) &> \co(1-\cos(\deltamax))\\
    -\su &> \co\tan(\deltamax/2) \geq \co\tan(-\Delta/2)\\
    -\su &> \co\tan(-\Delta/2)\\
    -\su\sin(-\Delta) &> \co(1-\cos(-\Delta))\\
    0 &> \co(1-\cos(-\Delta)) + \su\sin(-\Delta)\\
    0 &> \co(1-\cos(\Delta)) - \su\sin(\Delta).
\end{align*}
This contradicts the premise that 
$\co(1-\cos(\Delta)) - \su\sin(\Delta) \geq 0$.
Hence we have $\Delta > 0$.
\end{proof}

\noindent
We are now in position to prove our main result.
\begin{theorem}
\aca-\feasn on trees is NP-hard.
\end{theorem}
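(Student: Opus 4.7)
The plan is to reduce \textsc{Subset-Sum} to AC-feasibility on a star network, which is a tree. Given positive integers $a_1,\dots,a_n$ and a target $T$, I would first choose a base triple $(b_0, g_0, \deltamax)$ of realistic values for which $\npmax<0$, invoking the remark preceding Lemma~\ref{lem:negative} to ensure such values exist; let $\hat{p}_0$ and $\hat{q}_0$ denote the corresponding values of $\npmax$ and $\nqmax$. The reduction builds a star with one central load and $n$ generator leaves, where line $i$ is assigned $b_i := a_i b_0$, $g_i := a_i g_0$, and the shared $\deltamax$. Since $\npmax$ and $\nqmax$ are linear in $(b,g)$, the per-line values satisfy $\hat{p}_i = a_i\hat{p}_0$ and $\hat{q}_i = a_i\hat{q}_0$; in particular every line shares the common ratio $\hat{p}_0/\hat{q}_0$. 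The load's demand is set to $\pdem := T\hat{p}_0$ and $\qdem := T\hat{q}_0$.

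For correctness, consider any feasible solution and let $\Delta_i$ be the phase angle difference across line $i$ oriented from generator to load. At each leaf the generator inequality reduces, for the single incident line, to $g_i(1-\cos\Delta_i)-b_i\sin\Delta_i \geq 0$, so Lemma~\ref{lem:negative} forces $\Delta_i \geq 0$ and hence $\Delta_i \in [0,\deltamax]$. Lemma~\ref{lemma} then yields, for each $i$, $\pflow[\bb\ba]\hat{q}_i \leq \qflow[\bb\ba]\hat{p}_i$ (with $\pflow[\bb\ba]$ and $\qflow[\bb\ba]$ interpreted as the load-to-generator flows on line $i$), with equality iff $\Delta_i \in \{0,\deltamax\}$. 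Substituting $\hat{p}_i = a_i\hat{p}_0$, $\hat{q}_i = a_i\hat{q}_0$, dividing by $a_i>0$, and summing over $i$, the load's power balances collapse the sum to $\hat{q}_0\pdem - \hat{p}_0\qdem \leq 0$, which is in fact an equality by the choice of $\pdem$ and $\qdem$. Hence each per-line inequality is tight, so every $\Delta_i \in \{0,\deltamax\}$. Writing $x_i = 1$ when $\Delta_i = \deltamax$ and $x_i = 0$ otherwise, the load's real-demand equation reduces to $\hat{p}_0 \sum_i x_i a_i = T\hat{p}_0$, i.e.\ $\sum_i x_i a_i = T$, which is a \textsc{Subset-Sum} solution. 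Conversely, any \textsc{Subset-Sum} solution yields a feasible AC configuration by placing $\Delta_i$ at the appropriate endpoints, with a routine check that the generator inequalities and phase-angle bounds hold there. Since the star is a tree and \textsc{Subset-Sum} is NP-hard, AC-feasibility on trees is NP-hard.

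The main obstacle I anticipate is disciplined sign and direction bookkeeping: reconciling the $\pflow[\ba\bb]$ versus $\pflow[\bb\ba]$ conventions of the problem definition and Lemma~\ref{lemma} with the generator-to-load orientation of the construction; confirming that the leaf generator inequality both rules out the mirror regime $\Delta_i<0$ via Lemma~\ref{lem:negative} and is automatically satisfied at the two endpoint configurations $\Delta_i \in \{0,\deltamax\}$; and verifying that the instance has bit-size polynomial in the \textsc{Subset-Sum} input, which is immediate since $b_i = a_i b_0$ and $g_i = a_i g_0$ add only $O(\log a_i)$ bits per line.
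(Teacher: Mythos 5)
Your proposal is correct and follows essentially the same route as the paper: a reduction from subset sum to a star network with line parameters scaled by the input integers, using Lemma~\ref{lem:negative} to force nonnegative angle differences and Lemma~\ref{lemma} to force each difference to an endpoint of $[0,\deltamax]$. The only cosmetic difference is that you normalize the per-line inequality by dividing out $a_i$ before summing, whereas the paper divides by $\npmax$ and $\nqmax$; both yield the same tightness argument.
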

\begin{proof}
  To prove that star networks are \nphn, we present a reduction from
  the \nphn \emph{subset sum} problem to \aca-\feasn.  Given a set $M
  \subset \mathbb{N}_{>0}$ and a number $w \in \mathbb{N}_{>0}$,
  the subset sum problem decides whether there exists $V
  \subseteq M$ such that $\sum_{x \in V} x = w$.  If such a set $V$
  exists, we call the problem instance $(M,w)$ solvable.

  Let $(M,w)$ be an arbitrary instance of the subset sum problem.  We
  define the \acnetn \net[M,w] via $\gens := M$; $\loads := \{l\}$;
  $\lines := \{\edge[xl][bx][gx] \mid x
  \in M \})$; $\pdem[l] := w\npmax$; $\qdem[l] := w\nqmax$ where
  $\deltamax$, $b$, and $g$ are chosen to satisfy the condition in
  Lemma~\ref{lem:negative}.\footnote{Observe that the susceptance and the
    conductance are given by $b x$ and $g x$ respectively for
    simplifying the proof.} This encoding is polynomial in the size of
  $(M,w)$, since it uses only rational numbers and finitely many real
  numbers constructed from rational numbers, sine, and cosine. The
  rest of the proof shows that
\begin{equation*}
\net[M,w] \text{ has feasible solution} \iff (M,w) \text{ is solvable}.
\end{equation*}

\noindent
Case 1: $\net[M,w] \text{ has feasible solution} \Longleftarrow (M,w) \text{ is solvable}$.\\
Let $V$ be a solution for $(M,w)$.  
We define $\pa[l] := 0$; 
$\forall
x \in V: \pa[x] := \deltamax, \pflow[lx] := x\npmax, \qflow[lx] :=
x\nqmax, 
\pflow[xl] := x\co(1-\cos(\deltamax))-x\su\sin(\deltamax), 
\qflow[xl] := -x\su(1-\cos(\deltamax))-x\co\sin(\deltamax)$; 
$\forall x \in
M\setminus V: \pa[x] := \pflow[lx] := \qflow[lx] := \pflow[xl] :=
\qflow[xl] := 0$.  It is easy to see that the maximum phase angle
difference constraints and the \aca-power laws are satisfied.  Using
the fact that $V$ is a solution for $(M,w)$, the conservation law at
$l$ is
\begin{align*}
\sum_{x \in M} \pflow[lx] 
= \sum_{x \in V} \pflow[lx]
= \sum_{x \in V} x\npmax
= w\npmax = \pdem[l]\\
\sum_{x \in M} \qflow[lx] 
= \sum_{x \in V} \qflow[lx]
= \sum_{x \in V} x\nqmax
= w\nqmax = \qdem[l].
\end{align*}
Moreover, the generation constraints are satisfied because $\co(1-\cos(\deltamax))-\su\sin(\deltamax)$ is always positive for a positive phase angle difference.
Hence we have defined a feasible solution. \\

\noindent
Case 2: $\net[M,w] \text{ has feasible solution} \implies (M,w) \text{ is solvable}$.\\
Let \pa, \pflow, and \qflow be the feasible solution.
Lemma~\ref{lem:negative} together with the fact that we have the constraint
that the real power at the generators has to be positive implies that
$\forall x \in M: \pa[x] - \pa[l] \geq 0$.  We define $V := \{ x \in M
\mid \pa[x]-\pa[l] > 0 \}$.  Because we have a feasible solution,
Kirchhoff's conservation law for real and reactive power becomes
$\sum_{x \in M} \pflow[lx] = w\npmax$ and $\sum_{x \in M} \qflow[lx] =
w\nqmax$. 
Using $\npmax < 0$ and 
$\deltamax > 0 \implies \nqmax > 0$
we can derive
\begin{align*}
    \sum_{x \in M} \frac{\pflow[lx]}{\npmax}
    &= \sum_{x \in M} \frac{\qflow[lx]}{\nqmax}
\end{align*}
\begin{align*}
    0 
    = \sum_{x \in M} (\frac{\pflow[lx]}{\npmax} - \frac{\qflow[lx]}{\nqmax})
    = \sum_{x \in V} (\frac{\pflow[lx]}{\npmax} - \frac{\qflow[lx]}{\nqmax})
    = \sum_{x \in V} (\pflow[lx]\nqmax - \qflow[lx]\npmax).
\end{align*}
Eq.~\eqref{eq:ratio2} in Lemma~\ref{lemma} implies that every summand in this
equation is non-positive.  Hence all summands must be 0.  Given our
choice of $V$ and using Eq.~\eqref{eq:equal2} from Lemma~\ref{lemma}, we have
$\forall x \in V: \pa[x]-\pa[l] = \deltamax$.  This implies $\forall x
\in V: \pflow[lx] = x\npmax$ and hence using Kirchhoff's conservation
law for real power we have $\sum_{x \in V} \pflow[lx] = \sum_{x \in V}
x\npmax = w\npmax$ which proves $\sum_{x \in V} x = w$.
\end{proof}

\section{Conclusion}

This paper has shown that AC-Feasibility on tree networks is NP-Hard,
indicating that convex relaxations cannot be tight on tree networks
without additional conditions on the network. The proof relies on the
existence of arbitrarily small bounds on voltage magnitudes (we fixed
the voltage magnitudes to 1 in the proof for simplicity) and either
generation bounds, capacity constraints, or a bound on phase angle
differences.

\bibliographystyle{IEEEtran}
\bibliography{ac_stars}

\end{document}